    \titlespacing{\section}{0pt}{2ex}{1ex}
    \titlespacing{\subsection}{0pt}{1ex}{0ex}
    \titlespacing{\subsubsection}{0pt}{0.5ex}{0ex}
\newcommand{\red}[1]{\textcolor{black}{#1}}
\newcommand{\E}{\mathbf{E}}
\newcommand{\Prob}{\mathbf{P}}
\newcommand{\R}{\mathbb{R}}
\newcommand{\calP}{\mathcal{P}}
\newcommand{\SINR}{\textnormal{SINR}}
\def\fade{{F}} 
\newcommand{\threshold}{{\tau}} 
\newcommand{\noise}{{W}} 
\newcommand{\GRate}{\mathcal{R}} 
\newcommand{\Rate}{\mathcal{R}} 
\newcommand{\Thpt}{\mathcal{T}} 
\newcommand{\statespace}{{\cal{X}}}
\newcommand{\event}{\psi}
\DeclareMathOperator*{\argmax}{\arg\!\max}
\theoremstyle{plain} 
\newtheorem{Theorem}{Theorem}[section]
\newtheorem{Corollary}[Theorem]{Corollary}
\newtheorem{Proposition}[Theorem]{Proposition}
\theoremstyle{definition}
\theoremstyle{definition}
\newtheorem{definition}{Definition}[section]
\theoremstyle{definition} 
\newtheorem{Example}{Example}[section]
\theoremstyle{remark} 
\newtheorem{Remark}{Remark}[section]
\begin{document}

\title{Adaptive determinantal scheduling with fairness in wireless networks}

\author{H.P. Keeler and B. B{\l}aszczyszyn  
\thanks{{\bf H.P. Keeler} is  with the {\em University of Melbourne}, Melbourne, Australia. {\bf B. B{\l}aszczyszyn} is with {\em Inria/ENS}, Paris, France. }
\vspace{-1ex}
}
\date{\today}
\maketitle

\begin{abstract}
We propose a novel framework for wireless network scheduling with fairness using determinantal (point) processes. Our approach incorporates the repulsive nature of determinantal processes, generalizing traditional Aloha protocols that schedule transmissions independently. We formulate the scheduling problem with an utility function representing fairness. We then recast this formulation as a convex optimization problem over a certain class of determinantal point processes called $L$-ensembles, which are particularly suited for statistical and numerical treatments. These determinantal processes, which have already proven valuable in subset learning, offer an attractive approach to network resource scheduling and allocating. We demonstrate the suitability of determinantal processes for network models based on the signal-to-interference-plus-noise ratio (SINR). Our results highlight the potential of determinantal scheduling coupled with fairness. This work bridges recent advances in machine learning with wireless communications, providing a mathematically elegant and computationally tractable approach to network scheduling.
\end{abstract}

\section{Introduction}
Scheduling and allocating resources demands careful consideration in wireless networks and other complex systems. The  algorithms must efficiently distribute time, power, frequency, and other resources among network users, responding dynamically to changing network conditions and traffic demands. The existing literature spans a wide spectrum of approaches for scheduling and allocating resources, ranging from simple heuristics to sophisticated mathematical models. These techniques seek to optimize network resources with varied aims, such as maximizing network capacity,  throughput, and fairness~\cite{asadi2013survey,xu2021survey,dong2020deep}

At their core, schedulers are algorithms for selecting subsets from a broader set, with the goal of optimizing a certain utility function. When allocating resources by choosing subsets, a common requirement is some degree of repulsion in the subsets, meaning the scheduler should not select resources that are close to each other in some way. This is particularly the case when working with the signal-to-interference-plus-noise ratio (SINR). Besides transmitter-receiver distances, the main influence is the clustering of network nodes reducing SINR values. This effect strongly hints at choosing a scheduling (or subset selecting) method that has inherent repulsion.

\subsection{From fermions to scalable learning models}
Determinantal point processes, originally proposed to model repulsive particles known as fermions such as electrons, possess interesting and convenient mathematical characteristics~\cite{hough2006determinantal}. 
Determinantal processes are amenable to statistical inference methods, while they can also be easily and \emph{exactly} simulated on computers using an algorithm that takes a matrix eigen-decomposition~\cite{lavancier2015determinantal}. 


Over the last decade the machine learning research community has identified determinantal point processes as particularly promising tools for selecting complex subsets. Starting with the pioneering efforts by Kulesza and Taskar~\cite{kulesza2012determinantal}, researchers have successfully leveraged determinantal processes to develop sophisticated machine learning frameworks~\cite{gartrell2019learning}, demonstrating their potential for automated subset selection and generative models in general. 

Determinantal models are parametric and relatively scalable, whereas other models commonly suffer from an explosion in computational complexity.  Resource allocation as subset selection is typically challenging, often falling into the dreaded NP-complete complexity class, recalling that if there are $n$ elements in a set, there are $2^n$ subsets. But determinantal point processes are well-suited for handling subset problems, and we will see that they naturally extend a classic approach for scheduling.

\subsection{From spatial Aloha to proportional fairness} 
\label{ss.spatial_aloha} 
For wireless networks, selecting subsets of nodes to communicate is usually dictated by \emph{medium access control} (MAC) protocols. A classic random one is the (spatial) Aloha scheme in which network transmitters independently access the network with some probability $p$, where $p$ is a fixed constant sometimes called the \emph{medium access probability}. Proposed in the 1970s~\cite{abramson1970aloha}, this scheduling scheme is compatible with Poisson network models, where the transmitters are scattered across the plane $\R^2$ according to a Poisson point process $\Phi=\{X_i\}_i$ with intensity $\lambda>0$. Under the assumption of discrete time, at any time instant the transmitters accessing the network will form another Poisson point process  $\Phi_{\red{p}}=\{X_i\}_i$  with intensity $\lambda p$.



\subsubsection{Constant Aloha in Poisson bi-pole model}
\label{ss.Constant-Aloha}
Baccelli, B{\l}aszczyszyn and M\"uhlethaler~\cite{BBM03_allerton,BBM06IT,JSAC} studied a Poisson network model $\Phi_p$  with an Aloha scheme. They assumed each transmitter is associated to a single receiver located uniformly around the transmitter at a fixed distance $r$, resulting in the so-called \emph{bi-pole} or \emph{bi-polar model}; see the monograph by Baccelli and  B{\l}aszczyszyn~\cite[Chapter 16]{FnT2}. They were interested the value of probability $p$ that maximizes the overall coverage based on SINR values exceeding some threshold $\threshold>0$. To this end, they assumed a receiver at the origin $o$ and framed the problem in terms of the \emph{spatial throughput} defined as
\begin{equation}
p\lambda\Prob^o(\SINR(X_o,o,\Phi_p)> \threshold) \, ,
\end{equation}
where $\SINR(X_o,o,\Phi_p)$ is the SINR (which we define later) at the typical receiver $o$ with signal coming from  its transmitter $X_o$  and  $\Prob^o$ represents the Palm distribution of the typical receiver at the origin.
We can interpret the spatial throughput as the average spatial density of transmitter--receiver pairs communicating with the SINR greater than threshold~$\tau$. The resulting optimization problem requires finding the value of $p$ that maximizes the spatial throughput, namely
\begin{equation}
p^*:= \argmax_{0\leq p \leq 1} \left[ p\lambda\Prob^o(\SINR(\red{X_o,o},\Phi_p)> \threshold) \right] \,.
\end{equation}

For the path loss model, Baccelli, B{\l}aszczyszyn and M\"uhlethaler used the standard (singular) power-law  model
$\ell(x)=(\kappa|x|)^{\beta }$, where the constants $\beta>2$ and $\kappa>0$. They also assumed each wireless signal experienced independent Rayleigh fading by using independent and identically distributed (i.i.d.) exponential random variables. These model assumptions, which now form a well-studied network model, gave results showing that the spatial throughput is maximized for the probability 
$$
p^*=\min\left[1,\frac{\beta\sin(2\pi/\beta)}{2\pi^2}r^{-2}\threshold^{-2/\beta}\right]\,.
$$

\subsubsection{Adaptive, proportionally fair Aloha}
\label{ss.Adaptive-Aloha}
Baccelli, B{\l}aszczyszyn and Singh~\cite{baccelli2014analysis} improved on the above results in two main ways. They examined the case for transmitters being allowed to have different $p$ values depending on the Poisson network configuration $\Phi$. (This ability to have different $p$ value for each transmitter motivates the term \emph{adaptive}.) Furthermore, they used a log function as a utility function to study \emph{proportional fairness} in networks, inspired by the seminal paper by Kelly, Maulloo and Tan~\cite{kelly1998rate}. 

For their network model, Baccelli, Błaszczyszyn and Singh assumed both a power-law path loss function $\ell(x)=(\kappa|x|)^{\beta }$ and independent Rayleigh fading $\fade$.  Their final solution for the Aloha probability is a function, which they refer to as a \emph{policy} dependent on the \red{ geometry of the bi-pole model, and in particular} the transmitter location $x\in\R^2$ \red{ with the  receiver at $o$}. 
\red{This probability function \( p^*(x;\Phi) \) is the solution to the optimization problem  
\begin{align}
&p^{*}(x;\Phi):= \label{e.p*}
\argmax_{0 \leq \red{p(\cdot;\cdot)}\leq 1} \\ &\left[ \E^o  \left(\log[p(\red{X_o;\Phi}) \lambda \Prob^o(\SINR(\red{X_o},o,\red{\Phi_{p(\cdot;\cdot)}})>\threshold| \Phi)  \right) \right]\nonumber
 \,,
\end{align}
where \( x \in \mathbb{R}^2 \) is the transmitter's location, with its receiver at the origin \( o \). The expectation \( \E^o \) is taken under the Palm distribution \( \Prob^o \) of the typical receiver. The conditional probability \( \Prob^o(\cdot \mid \Phi) \) accounts for random fading variables, while the Aloha scheme is governed by the translation-invariant function $p(\cdot;\cdot)$. Specifically, in the bi-pole network \( \Phi_{p(\cdot;\cdot)}\), a transmitter at \( x \) communicating with its receiver at \( y \) accesses the node independently with probability \( p^*(x - y, \Phi - y) \).} 
Remarkably, by invoking the Mass Transport Principle, Baccelli, B{\l}aszczyszyn and Singh~\cite{baccelli2014analysis} offered an explicit and, as argued, unique construction of the optimal probability function $p^*(\cdot;\cdot )$.  The authors also considered a localized version of this function, where transmitters compute their values based solely on the geometry 
within a defined local region, which is formalized as a \emph{stopping set}. This localized calculation accounts for the impact of the network configuration outside this set on the aforementioned Palm distribution $\Prob^o$ of the typical receiver. Finally, the optimization problem~\eqref{e.p*} can be viewed through the lens of the more general theory of optimal stationary markings, as presented by B{\l}aszczyszyn and Hirsch~\cite{BLASZCZYSZYN2021153}.


\section{Related work}

\subsection{Scheduling and fairness}
There is no shortage of proposed methods for scheduling and allocating resources in wireless networks~\cite{xu2021survey}, as well as other areas such as cloud computing~\cite{vinothina2012survey}.
The most relevant research to the current work is that covering adaptive Aloha, which we detailed above with citations in Section~\ref{ss.spatial_aloha}.

There is also an extensive range of research on incorporating fairness into allocating resources, as it is a central problem in computing and communication systems in general. Applications range from wireless networks~\cite{khan2016fairness} to more recent trends in cloud computing~\cite{joe2018harnessing}.  Also see the recent paper by Si Salem, Iosifidis, and Neglia~\cite{si2022enabling} for recent overview of fairness approaches.

There are different types of fairness, such as \emph{alpha} (or $\alpha$) fairness and \emph{max-min} fairness. Here we use \emph{proportional} fairness, which was the subject of a highly influential paper by Kelly, Maulloo and Tan~\cite{kelly1998rate} who cast fairness as an optimization problem,  partly inspiring our current work. 


\subsection{Determinantal (statistical) learning}
Starting with the seminal work by Kulesza and Taskar~\cite{kulesza2012determinantal}, the machine learning research community has steadily examined and developed models based on determinantal processes. This work includes adapting determinantal models to handle positive correlation (or clustering)~\cite{brunel2018learning} and methods to train them~\cite{gartrell2019learning}.

\subsection{Determinantal scheduling}  
In  earlier work~\cite{blaszczyszyn2018determinantal}, we used a determinantal (discrete) point process and introduced a novel point process defined on bounded regions of the plane $\mathbb{R}^2$, which offers model capabilities such as a closed-form expression for the Laplace functional and Palm distribution. We  discussed the potential of using a determinantal scheduler in a wireless network and then training (or fitting) this model on pre-optimized network configurations~\cite[Section VI. C.]{blaszczyszyn2018determinantal}. 

We then proposed in a second paper~\cite{blaszczyszyn2020coverage} a determinantal scheduler based on the SINR of a wireless network. We gave two examples of wireless network models, for which  we derived expressions for the Palm distribution and Laplace functional. Using these tools, we obtained closed-form expressions for the coverage probabilities in the two respective models. Critically, this work laid the groundwork for exploring determinantal scheduling in wireless network by presenting fundamental results for coverage probabilities.

Concurrently, other researchers have begun exploring similar determinantal approaches. Shortly before and independently of our determinantal scheduling work~\cite{blaszczyszyn2020coverage}, Saha and Dhillon~\cite{saha2019machine} applied discrete determinantal point processes to wireless link scheduling, focusing on maximizing the overall network rate. They found optimal network subsets with geometric programming, which they then used to train their parametric determinantal model. In our previous paper~\cite{blaszczyszyn2020coverage}, we discussed how our work diverged from their work~\cite{saha2019machine}. 

Recently, Tu, Saha, and Dhillon~\cite{tu2023determinantal,tu2025determinantal} continued the above line of research, which focuses on training the model, thus complementing our investigations. Tu, Saha, and Dhillon~\cite{tu2023determinantal} started by considering a determinantal scheduler in a drone network model. The determinantal scheduler can be defined with a square matrix $S$, which is typically positive semi-definite. Tu, Saha, and Dhillon~\cite{tu2023determinantal} proposed a $S$ matrix based on the network interference.

Tu, Saha, and Dhillon~\cite{tu2023determinantal} also examined the other part of the determinantal scheduler, which is a vector $q$ capturing the quality of each element in a set.  They developed a parametric quality model based on the SINR of each network node. They then found optimal subsets of network of transmitter-receiver pairs, meaning a bi-pole model, by employing geometric programming, a classical optimization approach. Tu, Saha, and Dhillon later extended this work in a recent preprint~\cite{tu2025determinantal}, which includes another network model. This~\cite{tu2025determinantal} and the aforementioned work~\cite{saha2019machine,tu2023determinantal} examined instantaneous rates. Expressions for coverage probabilities, which rely upon the determinantal properties, were not employed. No type of fairness was considered.

\subsection{Current contributions}
We introduce a general optimization framework based on time-averaged rates. We then focus on a logarithmic utility function designed for proportional fairness. We apply the proportional fairness to a network under a scheduling scheme based on determinantal point processes. For an example, we apply this scheduler to a bi-pole network model. We recast the SINR-based coverage probability into a tractable  kernel for determinantal  process. We achieve this by using the tractable Palm and algebraic properties of determinantal processes.

\section{Random medium access control}
We consider a wireless network model as a finite set of points $\statespace=\{x_1,\ldots,x_n\}$.
By a (random) \emph{medium access control (MAC)} scheduler, we understand a distribution of the {\em random subset} $\Psi\subset\statespace$ of nodes, which are authorized (that is, scheduled) to be simultaneously active (that is, transmitting). In practice, the MAC scheduler $\Psi$ is an algorithm allowing one to sample from the probability distribution of the random set~$\Psi$.

\begin{Example}[Aloha]
The constant and adaptive Aloha schemes described in Section~\ref{ss.Constant-Aloha}
and~\ref{ss.Adaptive-Aloha} are examples of a random MAC scheduler, where the random set $\Psi$ is selected using {\em (independent) Bernoulli} sampling with constant probability $p$ or variable probability $p(x_i)= 
p(x_i;\statespace)$
respectively.
\end{Example}

\subsection{Transmission rate and throughput}
We assume that whenever a subset $\psi\subset\statespace$ of nodes is simultaneously active,  then the {\em transmission rate} $\GRate_i(\psi)\ge0$ is established in the node $x_i\in\statespace$,
with $\GRate_i(\psi)=0$ when $x_i\not\in\psi$. For a given MAC scheduler (distribution of) $\Psi$, we define the {\em throughput} $\Thpt_i$ of the node $x_i\in\statespace$  as the 
average transmission rate achieved in the node~$x_i$
\begin{equation}\label{e.Throughput}
\Thpt_i:=\E[\GRate_i(\Psi)].
\end{equation}
We stress that $\Thpt_i$ corresponds to the time-average rate of transmitter $x_i$ when the scheduler assigns transmissions in a time-stationary way with the marginal distribution of the scheduled transmissions corresponding to $\Psi$. 

We now consider some examples of the  rate function $\GRate_i$.
\begin{Example}
Consider  rates of the general form
\begin{equation}
\GRate_i(\psi)=\bar{\GRate}_i g(|\psi|)\,,     
\end{equation}
where $g(n)\ge0$ is some given function, $|\psi|$ is the cardinality of~$\psi$ and $\bar{\GRate}_i$ are the ``peak rates'' of the nodes. 
    \textbf{Case 1.} $g(n)=1$ corresponds to the rates obtained when the nodes are scheduled separately.
    \textbf{Case 2.}   $g(n)=1/n$ gives the policy known as {\em round robin}.
   \textbf{Case 3.}   $g(n)=(\sum_{k=1}^n1/k)/n$ arises when 
$\GRate_i(\psi)=\bar{\GRate}_i\E[\max_{x_j\in\psi}: F_j]/n$, for $x_i\in\psi$, where $F_i$
are i.i.d. unit-mean exponential random variables.
This last case is the so-called \emph{opportunistic} scheduling with respect to random propagation effects; see the paper by Borst~\cite{borst2005user}.
\end{Example}

\subsection{Utility and fairness}
For a given network configuration $\statespace$, we can optimize the performance of the MAC scheduler $\Psi$ by first considering some utility function $U$ of the throughput, and then maximizing the total utility
\begin{equation}
U_{\Psi}:= \sum_{x_i\in\statespace}U(\Thpt_i).
\end{equation}
To reduce subscripts, we will often write such expressions involving sums or optimums taken over $\statespace$  using just $i$, implying $U_\Psi= \sum_{i}U(\Thpt_i).$
Using this total utility $U_{\Psi}$ we define our optimal scheduler $\Psi$.
\begin{definition}[Scheduler $\Psi$ with optimal $U_{\Psi}$]\label{def.Psiopt}
For a given network configuration $\statespace$, subsets $\psi\subset\statespace$,  the rate functions $\GRate_i(\psi)$, and a concave utility function $U$, we define the optimal scheduler $\Psi$ as the solution of general optimization problem 
\begin{equation}\label{e.Utility-optimization}
\argmax_{\mathcal{L}(\Psi)} U_\Psi,  \end{equation}
where the maximum is taken over all probability distributions 
$\mathcal{L}(\Psi)$ 
of (random) subsets $\Psi\subset\statespace$.
\end{definition}
In the terminology of Kelly, Maulloo and Tan~\cite{kelly1998rate}, we can now  define a \emph{proportionally fair} scheduler $\Psi$.
\begin{definition}[Proportionally fair scheduler $\Psi$]\label{def.profair}
The scheduler $\Psi$ is a {\em proportionally fair}  scheduler when it solves the optimization problem~\eqref{e.Utility-optimization} with the logarithmic utility function $U(t)=\log(t)$. In other words, this scheduler is the solution to the optimization problem 
\begin{equation}
\argmax_{\mathcal{L}(\Psi)} \sum_{i}\log(\Thpt_i)\, . 
\end{equation}
\end{definition}
Note that any scheduler $\Psi$ that systematically excludes any transmitters will have logarithmic utility $U_\Psi(\statespace)=-\infty$. Hence, a natural assumption is that $\Prob(x_i\in\Psi)>0$
for all $x_i\in\statespace$. Furthermore, any Aloha scheme with $p(x_i)>0$ for all nodes $x_i$ will have a finite utility.

\begin{Remark} 
The solution set of the optimization problem~\eqref{e.Utility-optimization} is non-empty and consists of a unique probability distribution
$\mathcal{L}(\Psi)$ that represents both a local and, consequently, a global maximum of~$U_\Psi$.
This is due to the fact that the function $U_\Psi$ is a continuous and convex function that maps from the compact, convex set of probability distributions defined on the power set $2^\statespace=\{\psi: \psi\subset\statespace\}$. For details, consult a text on (convex) optimization methods such as Boyd and Vandenberghe~\cite{boyd2004convex}.
\end{Remark}



\subsection{Scheduler as a determinantal point process}
\label{ss.DetMAC}
We start by defining a determinantal point process on a discrete (finite) state space $\statespace$.
\begin{definition} [Determinantal point process]\label{def.detpp}
For a real, positive semi-definite matrix $K$ indexed by the points of a state space $\statespace$,  having  all its eigenvalues in the interval $[0,1]$, a determinantal point process $\Psi$ on state space $\statespace$
is defined by its finite-dimensional probabilities
\begin{equation}\label{e.probdet}
\Prob(\Psi\supseteq  \event  ) = \det(K_{\psi}),    
\end{equation}
where $\det$ denotes the determinant, and $K_{\event}:=[K]_{x_i,x_j\in {\event}}$ denotes the restriction of $K$ to the  points $\psi\subset\statespace$. 
\end{definition}
Typically, we will interpret this discrete random point process as a random subset of the underlying state space $\statespace$, which we now use to define our  scheduler. 
\begin{definition} [Determinantal determinantal scheduler $\Psi$]  
We define a determinantal MAC scheduler $\Psi$ as a random subset $\Psi\subset\statespace$
whose distribution is a determinantal process on some underlying state space $\statespace$
with some {\em (marginal)  kernel} $K$, as given in Definition~\ref{def.detpp}. 
\end{definition}

\begin{Remark}
The determinantal MAC scheduler is a natural extension of the adaptive Aloha of Section~\ref{ss.Adaptive-Aloha}, where we consider it here for a finite network. We can see this by taking the diagonal matrix kernel 
$K=\mathrm{diag}[(p(x_i))_{x_i\in\statespace}]$,  for any function $p(x)$, which
makes $\Psi$ an (independent) Bernoulli thinning of $\statespace$.
\end{Remark}
A determinantal point process is stochastically \emph{repulsive}, meaning that the presence of a point in a given region reduces the probability of finding another point within the same region.
Intuition suggests that the repulsive nature, so points tend not not cluster together, should lead to better performance (larger utility) of the determinantal scheduler. The  existence of efficient sampling algorithms for (discrete) determinantal point processes is yet another motivation to use them as MAC schedulers.

We recall the throughput $\Thpt_i=\E[\GRate_i(\Psi)]$ and write $\mathcal{L}_K(\Psi)$ to denote all the determinantal probability distributions on the space $\statespace$, which corresponds to all the possible determinantal kernels $K$. We now define the determinantal version of the proportionally fair scheduler defined in Definition~\ref{def.profair}.
\begin{definition}[Proportionally fair,  determinantal scheduler $\Psi$]\label{def.profairdet}
We define a {\em proportionally fair, determinantal MAC scheduler} as any solution of the  optimization problem 
 \begin{equation}\label{e.Det-Utility-optimization}
\argmax_{\mathcal{L}_K(\Psi)}\sum_{i}\log(\Thpt_i)=\argmax_{K}\sum_{i}\log(\Thpt_i)\, ,
\end{equation}
where  the maximum is taken over all marginal determinantal  kernels $K$ of the determinantal process $\Psi$ on state space $\statespace$. 
\end{definition}
 
\begin{Remark}
For any given network $\statespace$ and rate functions $\GRate_i(\psi)$, where $\psi\subset\statespace$, we assert that  proportionally fair, determinantal
MAC schedulers exist because the expectation  $\E[\Thpt(\Psi)]$ is a continuous function of the kernel  $K$ of $\Psi$, and the set of (symmetric, non-negative definite) matrices with bounded eigenvalues is compact. Unfortunately, the problem~\eqref{e.Det-Utility-optimization} is not a convex optimization problem.
\end{Remark}

\subsection{Scheduler as a parametric $L$-ensemble}
\label{ss.Lensemble}
To render the above optimization problem convex, we restrict our attention to a practical class of determinantal kernels. 
\begin{definition} [Determinantal $L$-ensembles]\label{def.Lpp}
For a real, semi-definite matrix~$L$ indexed by the points of a state space $\statespace$, $L$-ensemble  $\Psi$ on state space $\statespace$ is defined as a determinantal point process with a kernel of the form
\begin{equation}\label{e.K-L}
K=L(L+I)^{-1} \,.
\end{equation}
where $K$ is a determinantal kernel, as given in Definition~\ref{def.detpp}. Moreover, for all sets $\psi\subset\statespace$, the $L$-ensemble $\Psi$ admits the probabilities
\begin{equation}\label{e.probL}
\Prob(\,\Psi=\event\,) = \frac{\det(L_{\event}) }{\det{(L+I)}}\,,
\end{equation} 
where $L_{\event}$ denotes the restriction of $L$ to the  points $\psi\subset\statespace$.
\end{definition}
Compare the determinantal probability expression~\eqref{e.probdet} with the $L$-ensemble version given by expression~\eqref{e.probL}, which implies the requirement
 $\det{(L+I)}\neq 0$. The kernel $L$ has the same the eigenvectors as $K$, but its eigenvalues are equal to $\lambda_i/(1+\lambda_i)$, where $\lambda_i$ are the eigenvalues of $L$, so the eigenvalues of the  matrix $L$  need not be smaller than one. Any determinantal kernel $K$ having all eigenvalues strictly smaller than one has the form~\eqref{e.K-L} with $L=(I-K)^{-1}-I$. Restricting to $L$-ensembles is equivalent to assigning a nonzero probability to the empty set; see Kulesza and Taskar~\cite[Section 2.2]{kulesza2012determinantal}. 

These determinantal {\em $L$-ensembles} were originally studied in mathematical physics by Borodin and Rains~\cite{borodin2005eynard} and then later introduced to machine learning by Kulesza and Taskar~\cite{kulesza2012determinantal}, due to the closed-form probability expression~\eqref{e.probL} allowing for model training. Usually researchers assume that the matrix $L$ is symmetric, but there is recent work on non-symmetric versions~\cite{brunel2018learning,gartrell2019learning
}.

We introduce a parametric version of the kernel matrix $L$ studied by Kulesza and Taskar~\cite{kulesza2012determinantal}. For a real, non-negative definite matrix $S$ indexed by the elements of $\statespace$, we define an $L$-ensemble matrix (kernel) $L=L(S,q)$ so that the latter has the matrix elements
\begin{equation}\label{e.qSq}
[L]_{x_i,x_j}=[L(S,q)]_{x_i,x_j}:=q(x_i)Sq(x_j),    
\end{equation}
where $q(\cdot)\ge0$ is some function defined on $\statespace$. The matrix $S$ is called the {\em similarity matrix}, where the value $[S]_{x_i,x_j}$ measures how similar the points $x_i$ and $x_j$ are. In practice, often the general form  $[S]_{x_i,x_j}= \rho(|x_i-x_j|)$ is assumed, where $\rho$ is some non-negative (radial) function, giving a symmetric matrix $S$. The function (vector) $q$ is the \emph{quality}, which places a weight on each point, meaning points with higher qualities are more likely to be randomly selected. We use the term {\em $L(S,q)$-ensemble} to refer to a determinantal point process having a kernel $K$ in the form~\eqref{e.K-L} with the matrix $L=L(S,q)$ given by expression~\eqref{e.qSq}. 

\begin{Example}
If the matrix $S=I$, then the resulting $L(S,q)$-ensemble is simply the independent thinning of $\statespace$ with probabilities~$p(x_i)=q^2(x_i)/(1+q^2(x_i))$, which brings us back to adaptive Aloha.
\end{Example}

When training determinantal models, usually it is the quality function $q$ that is trained, whereas $S$ is  treated as a hyperparameter, motivating another utility function. Recalling the state space cardinality $n=|\statespace|$, for a given similarity matrix $S$, we define a utility function
from the orthant $[0,\infty)^{n}$ to the real line $\R$ as
\begin{equation}
U_S(q):=
\sum_i\log(\E[\GRate_i(\Psi)])=\sum_i\log(\Thpt_i)\,,
\end{equation}
where the random scheduler $\Psi$ is now an $L(S,q)$-ensemble.For a given matrix $S$, we write  $\mathcal{L}_q(\Psi;S)$ to denote all the parametric determinantal  $L(S,q)$-ensemble distributions on the space $\statespace$, which corresponds to all the possible quality functions $q$.
We now define the $L(S,q)$-version of our proportionally fair scheduler  defined in Definition~\ref{def.profair}.
\begin{definition}[Proportionally fair $L(S,q^*)$ scheduler]\label{def.profairL}
For a given $S$, a {\em proportionally fair $L(S,q^*)$ MAC scheduler} is any solution to the optimization problem
 \begin{equation}\label{e.L-Utility-optimization}
\argmax_{\mathcal{L}_q(\Psi;S)}\sum_{i}\log(\Thpt_i)=\argmax_{q\in[0,\infty)^{n}}\sum_i\log(\Thpt_i)\,,
\end{equation}
\end{definition}

When optimizing and fitting any function, we are interested whether it is convex (or concave). This motivates the next result, for which we write  $e^w:=(e^{w(x_i)}:x_i\in\statespace)$ for $w=(w(x_i):x_i\in\statespace)\in[0,\infty)^{n}$. 
\begin{Proposition}
The utility function~$U_S(e^w)$ is a concave function of $w\in [0,\infty)^{n}$, implying any local maximum of  $U_S(e^w)$ identifies a proportionally fair $L(S,e^w)$   scheduler.
\end{Proposition}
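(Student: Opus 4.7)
\medskip
\noindent\textbf{Proof plan.} The plan is to reduce concavity of $U_S(e^w)$ to a matrix inequality comparing covariance matrices of the $L(S,e^w)$-ensemble and its $\GRate_i$-tilted variants.

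First I would make $\Thpt_i$ explicit using Definition~\ref{def.Lpp}. The $L(S,e^w)$-ensemble has $\Prob(\Psi=\psi)=\det(S_\psi)e^{2\sum_{j\in\psi}w_j}/Z(w)$ with $Z(w):=\det(I+L(w))=\sum_\psi\det(S_\psi)e^{2\sum_{j\in\psi}w_j}$, where $\det(S_\psi)\ge 0$ by the positive semi-definiteness of $S$. Hence $\Thpt_i=N_i(w)/Z(w)$ with $N_i(w):=\sum_\psi\GRate_i(\psi)\det(S_\psi)e^{2\sum_{j\in\psi}w_j}$. Both $\log Z$ and $\log N_i$ are log-sum-exp expressions with non-negative coefficients and are therefore convex in $w$. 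Writing
\begin{equation*}
U_S(e^w)=\sum_i\log N_i(w)-n\log Z(w)
\end{equation*}
exhibits $U_S(e^w)$ as a difference of convex functions.

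Next I would identify the Hessians as scaled covariance matrices through the standard log-sum-exp identity. Setting $p(\psi):=\Prob(\Psi=\psi)$ and $p_i(\psi):=\GRate_i(\psi)p(\psi)/\Thpt_i$ one obtains $\nabla_w^2\log Z=4\,\mathrm{Cov}_p[\mathbf{1}_\Psi]$ and $\nabla_w^2\log N_i=4\,\mathrm{Cov}_{p_i}[\mathbf{1}_\Psi]$, where $\mathbf{1}_\Psi\in\{0,1\}^n$ is the membership-indicator vector. Concavity of $U_S(e^w)$ is then equivalent to the matrix inequality
\begin{equation*}
n\,\mathrm{Cov}_p[\mathbf{1}_\Psi]\succeq\sum_i\mathrm{Cov}_{p_i}[\mathbf{1}_\Psi].
\end{equation*}

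The hard step is establishing this matrix inequality. I would exploit the fact that $p$ is an $L$-ensemble whose generating polynomial $P(z)=\det(I+\mathrm{diag}(z)S)$ is real-stable, equivalently, $p$ is strongly Rayleigh, which produces strong negative-correlation properties. In the canonical case $\GRate_i(\psi)=\mathbf{1}_{x_i\in\psi}$, $p_i$ is the $L$-ensemble conditioned on $x_i\in\Psi$ and so is itself a determinantal process on $\statespace\setminus\{x_i\}$; the inequality then reduces to a total-variance-decomposition argument combined with closed-form kernel updates. For general $\GRate_i\ge 0$, $p_i$ is a positive tilt of $p$, and analogous real-stable techniques (preservation of strong Rayleigh under tilting, together with the attendant correlation inequalities) should furnish the bound. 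The restriction $w\in[0,\infty)^n$ likely plays a role, since elementary two-point examples show that the bound can degrade when some $w_j$ is taken very negative.
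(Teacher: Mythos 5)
Your setup is sound and, in fact, more honest than the paper's own argument: the paper proves this proposition with a one-sentence appeal to Kulesza and Taskar's Proposition~4.1, whose concavity comes from the fact that in the likelihood setting the numerator is $\log\det(L_\psi)=2\sum_{j\in\psi}w_j+\log\det(S_\psi)$, i.e.\ \emph{affine} in $w$, so that only the convexity of $\log\det(L+I)$ matters. Your decomposition $U_S(e^w)=\sum_i\log N_i(w)-n\log Z(w)$ correctly exposes that this structure is absent here: each $\log N_i$ is a genuine log-sum-exp (convex, not affine), so the analogy does not transfer and some additional inequality is needed. The problem is that you stop exactly where the work begins: the matrix inequality $n\,\mathrm{Cov}_p[\mathbf{1}_\Psi]\succeq\sum_i\mathrm{Cov}_{p_i}[\mathbf{1}_\Psi]$ is the entire content of the proposition, and ``analogous real-stable techniques \ldots should furnish the bound'' is an assertion, not a proof.

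Worse, that inequality is false at the level of generality you allow (arbitrary rates $\GRate_i(\psi)\ge 0$ vanishing off $\{x_i\in\psi\}$), and the failure occurs at points with $w\in[0,\infty)^n$, so the positivity restriction does not rescue it. Take $n=2$, $S=I$, $\GRate_2(\psi)=\mathbf{1}[x_2\in\psi]$, and an SINR-style rate $\GRate_1(\{x_1\})=1$, $\GRate_1(\{x_1,x_2\})=\epsilon$ with small $\epsilon>0$. With $\pi_i=e^{2w_i}/(1+e^{2w_i})$ one gets $\Thpt_1=\pi_1\bigl(1-(1-\epsilon)\pi_2\bigr)$, $\Thpt_2=\pi_2$, and the $w_2$-dependence of $U_S(e^w)$ is
\begin{equation*}
g(w_2)=2w_2+\log\bigl(1+\epsilon e^{2w_2}\bigr)-2\log\bigl(1+e^{2w_2}\bigr),
\qquad
g''(w_2)=4u\left(\frac{\epsilon}{(1+\epsilon u)^2}-\frac{2}{(1+u)^2}\right),\quad u=e^{2w_2},
\end{equation*}
which is strictly positive at $u=1/\epsilon$ (i.e.\ $w_2=\tfrac12\log(1/\epsilon)>0$) whenever $\epsilon<3-2\sqrt{2}$. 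So $U_S(e^w)$ is strictly convex along the $w_2$-direction at an interior point of the claimed domain, and your target covariance inequality fails there. Consequently your plan cannot be completed as stated: concavity can only hold under additional hypotheses on the rate functions (e.g.\ the ``canonical'' case $\GRate_i(\psi)=\bar{\GRate}_i\mathbf{1}[x_i\in\psi]$, which you mention but also leave unproven, deferring to kernel-update and variance-decomposition arguments you do not carry out). The correct conclusion of your analysis is not a proof of the proposition but the observation that the paper's proof-by-analogy glosses over precisely the step at which, for the SINR-dependent rates the paper actually uses, the claimed concavity needs either a restriction on $\GRate_i$ or a different argument.
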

\begin{proof}
The proofs follows in a similar fashion to that of Kulesza and Taskar~\cite[Proposition~4.1]{kulesza2012determinantal}.
\end{proof}

\subsection{Solution summary}
For the optimization problem stated in Definition~\ref{def.Psiopt}, the different types of solutions are represented as a Venn diagram in Figure~\ref{fig:solutions}, which shows that all feasible solutions encompass all determinantal $K$  solutions, which encompass all determinantal $L$ solutions, and so on. In this hierarchy, a non-adaptive Aloha solution for a fixed $p$ is a single point inside the set of adaptive Aloha solutions. This solution or model hierarchy is typical when working with statistical, machine learning and artificial intelligence models, where one endeavors under the assumption (or hope) that the optimal model will be close to a suitable and learnable model. Ideally, for our model the optimum or near optimum solutions form part of the $L$-ensemble solutions. 

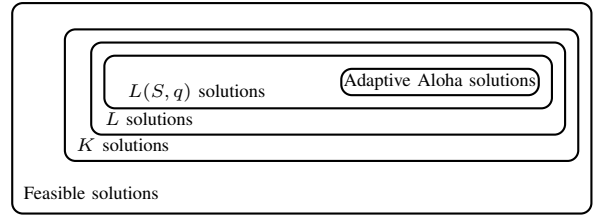
\begin{figure}
    \centering
    \begin{tikzpicture}[thick, scale=0.70]
\draw[rounded corners] (0, 0)  rectangle ++(11, 4) {} (1.5,0.4) node{\scriptsize
 Feasible solutions} ;
\draw[rounded corners] (1, 1) rectangle ++(9.75, 2.5) (2.1,1.3) node{\scriptsize
$K$ solutions} ; 
\draw[rounded corners] (1.5, 1.5) rectangle ++(9, 1.75) (2.6,1.8) node{\scriptsize
$L$ solutions};
\draw[rounded corners] (1.75, 2) rectangle ++(8.5, 1) (3.5,2.3) node{\scriptsize
$L(S,q)$ solutions};
\draw[rounded corners] (6.25, 2.25) rectangle ++(3.75, .5) (8.12,2.5) node{\scriptsize
Adaptive Aloha solutions};
\end{tikzpicture}
\caption{Solution hierarchy for the  problem in Definition~\ref{def.Psiopt}. }
\vspace{-4.5ex}
\end{figure}\label{fig:solutions}

\subsection{Learning the quality model through node features}
\label{ss.Genarla-learning}
The distribution of any proportionally fair $L(S,q^*)$-ensemble
$\Psi$, which is a solution of~\eqref{e.L-Utility-optimization},
depends on the given network  $\statespace$, the (optimal) quality $q^*=q^*(\statespace)$, and, implicitly, the rates $\GRate(\psi)$ for $\psi\subset\statespace$.  In other words, when $\statespace$ is replaced by some
other network $\statespace'$, then one has to
solve the optimization problem~\eqref{e.L-Utility-optimization} again to obtain another
$L(S,q^*(\statespace'))$-ensemble. In the spirit of statistical learning, we ultimately want to  remove this step of optimizing for each new network $\statespace$.

On this note, it is often convenient to further approximate the 
dependence $q^*(\statespace)$  of the optimal  $L$-ensemble on the network $\statespace$ by
some more explicit relation, where the quality vectors $q(x_i)$ of
nodes $x_i\in\statespace$ depend on some easily accessible characteristics  
of the nodes $x_i$ in the network $\statespace$ via some simple
(say log-affine) function that is fitted by considering
some class of networks $\statespace$.

To this end, let $\{f_i(\statespace)\in\R^M: x_i\in\statespace\}$ be some given, real,
$M$-dimensional marks of the nodes $x_i\in\statespace$, called the {\em quality features} of nodes $x_i$.
We can think of them as describing the local geometry of all nodes in $\statespace$ around
$x_i$. 
For each  $x_i$, we assume that the quality vector takes the form   
\begin{equation}\label{e.quality}
q(x_i)=q(x_i;\theta,\statespace):= e^{\theta^\top f(x_i;\statespace) }\,,
\end{equation}
where the vector $\theta\in\R^M$ is the fitting parameter of the quality model and the vector function $f(x_i;\statespace)$ is the feature vector of $x_i$ in the network~$\statespace$, giving 
the scalar product $\theta^\top f(x_i;\statespace)$.
For example, one can take the two-element feature vector
$
f(x_i;\theta,\statespace)=\Bigl(1,\min_{j\not=i}(|x_i-y_j|)\Bigr).
$

\subsection{Training the model}
We just introduced a parametric model for the quality features, which is a stepping stone for statistically training (or fitting) determinantal models. It is tempting to train our model on pre-optimized networks in a similar fashion to previous work~\cite{blaszczyszyn2018determinantal,saha2019machine,tu2023determinantal,tu2025determinantal}. But that is beyond the scope of the current work, where we have instead concentrated on determinantal scheduling with proportional fairness. 

We will just point out that determinantal models can be trained using maximum-likelihood methods, due to the convenient mathematical properties of determinants. The training approach hinges upon the fact that the function $-\log[\det(A)]$, where $\det(A)$ is a determinant of a positive semi-definite matrix $A$, is a convex function. 
For more details, see the textbook~\cite[Chapter 4]{kulesza2012determinantal} and the aforementioned work~\cite{blaszczyszyn2018determinantal,saha2019machine,tu2023determinantal,tu2025determinantal}.

\section{Bi-pole network model  with determinantal MAC scheduling}
\label{s.Network}
We will now apply our determinantal MAC scheduler to a wireless network model with fixed transmitters and receivers, and random propagation effects. 
Consider a fixed (that is, non-random) configuration of potential transmitters on the plane written as $\statespace=\{x_i\}_{i=1}^n\subset\R^2$. For each transmitter $x_i\in \statespace$, we consider  a receiver located at $y_i$, which, in point process terminology, is simply a mark, resulting in a marked point process $\tilde{\statespace}=\{(x_i,y_i)\}_{i=1}^n$. This is a bi-pole model~\cite[Chapter 16]{FnT2} coupled with determinantal scheduling, which we introduced in a previous paper~\cite[Section III]{blaszczyszyn2020coverage}.

We write $P_{x,y}$ to denote the received power at location $y$ of a signal emanating from a transmitter located at $x$. For a transmitter $x_i\in \statespace$, its SINR at location $y\in\R^2$ is given by
\begin{equation}
\SINR(x_i,y)= \frac{ P_{x_i,y}}{W+\sum\limits_{x_j\in \statespace\setminus {x_i}} P_{x_j,y}  } \,,
\end{equation}
where $W$ denotes the noise power.
We can make some standard assumptions on the propagation model. For a signal traversing a distance $|x-y|$, we assume it undergoes path loss according to the  function
$\ell(x-y)=(\kappa|x-y|)^{-\beta }$, where $\kappa>0$ and $\beta>0$. For any point $y\in \R^2 $, let the random variable $F_{x_i,y}$ describe the  fading of the signal propagating from transmitter $x_j$ to location $y$. The resulting SINR expression is
\begin{equation}
\SINR(x_i,y,\statespace)= \frac{ F_{x_i,y} \ell(x_i-y) }{\noise+\sum\limits_{x_j\in \statespace\setminus {x_i}}  F_{x_i,y}\ell(x_j-y)  } \,.
\end{equation}

\subsection{Coverage probabilities under determinantal scheduler}
We use a determinantal kernel $K$ on the set of transmitters $\statespace$, as described in Section~\ref{ss.DetMAC}. We assume that the kernel $K=K(\tilde\statespace)$ depends on the configuration of all potential transmitters and their receivers
$\tilde\statespace=\{(x_i,y_i)\}$, but not on their fading variables.
The discrete point process $\Psi$ is a determinantal thinning of the transmitter configuration $\statespace$.

For a function $f(x_i)$, where $x_i\in \statespace$, we write $K\{f\}$ to refer to a (matrix) kernel defined as
\begin{equation}
    [K\{f\}]_{x_i,x_j}:= \sqrt{1-f(x_i)}[K]_{x_i,x_j}\sqrt{1-f(x_j)}\,. 
\end{equation}
For $z\in\statespace$, we write 
$K^!_{z}$ to denote the kernel indexed by points of $\statespace\setminus\{z\}$ with entries
\begin{equation}
[K_z^!]_{x_i,y_i}=[K]_{x_i,y_i}-
\frac{[K]_{x_i,z}[K]_{y_i,z}}{[K]_{z,z}}\quad x_i,x_j\in\statespace\setminus\{z\}.
\end{equation}
This is the kernel of the reduced Palm version $\Psi^!_z$ of $\Psi$ given the point
$z\in\Psi$, hence the probability
\begin{equation}
\Prob\{\Psi_z^!\cup\{z\}\supseteq\psi\}:=
\Prob\{\Psi\supseteq\psi\,|\,z\in\Psi\}=\det((K^!_z)_\psi)
\end{equation}
for $\psi\subset\statespace$. Similar results exist for conditioning on multiple points; see Borodin and Rains~\cite[Proposition 1.2]{borodin2005eynard}. 
\begin{Example}\label{ex.Palm}
Consider a state space or network $\statespace$ consisting of three points, $\statespace=\{x_1,x_2,x_3\}$, and define a determinantal point process with the kernel $[K]_{i,j}=k_{i,j}$, where $1\leq i,j\leq 3$.
The diagonals are probabilities, meaning $0\leq k_{11},k_{22},k_{33}\leq 1$. For  point $x_1$, the reduced Palm version of the kernel is
\begin{equation}
K_{x_1}^!=    \begin{bmatrix} 
 k_{22}  & k_{23} \\ 
k_{23}  &  k_{33}
\end{bmatrix} -\frac{1}{k_{11}}
\begin{bmatrix} 
 k_{12}^2 & k_{12}k_{13} \\ 
k_{12}k_{13}  &  k_{13}^2
\end{bmatrix}.
\end{equation}
\end{Example}

We now define some functions for the forthcoming result, Proposition~\ref{p.SINR-conditional}, on the successful coverage probability.
For $\kappa$ and $\beta$ parameters of the signal propagation model, as well as the noise power $W$ and SINR threshold  $\tau>0$  (see Section~\ref{s.Network}), we define the functions
\begin{align}\label{e.h1}
h(s,r)&:=1-\frac{1}{\frac{1}{\threshold}(s/r)^{\beta}+1}\quad s,r\ge0,\\
w(s)&:= e^{- (\threshold \noise) (\kappa s)^{\beta}}\quad s\ge0.\label{e.w1}, 
\end{align} 
where  of $s,r>0$. These functions come from a  result that we presented in a previous paper~\cite[Lemma II.1]{blaszczyszyn2020coverage}. 
They are useful for expressing the SINR coverage probability in deterministic network models with Rayleigh fading and power-law path loss function. 
Furthermore, for a  point $x_i\in\statespace$ define the  functions
\begin{equation}
h_{x_i}(x_j):=h(|x_j-y_i|,|x_i-y_i|),\quad x_j\in\statespace\setminus\{x_i\}
\end{equation}
and constants $W_{x_i}:=w(|x_i-y_i|)$.

Now consider determinantal scheduler $\Psi$ with kernel~$K$ on the network configuration $\statespace$. We now present a corresponding result, which we first presented in the paper~\cite[Proposition IV.1.]{blaszczyszyn2020coverage} with the proof~\cite[Appendix A]{blaszczyszyn2020coverage}.
\begin{Proposition}\label{p.SINR-conditional}
For given $x_i\in\statespace$ and $\tau\ge0$, the distribution of the SINR  at $y_i$ for a signal coming from $x_i$, given $x_i$ is selected by the scheduler, 
is obtained with expression
\begin{equation}
\Prob\{\,\SINR(x_i,y_i,\Psi)>\tau\,|\,x_i\in\Psi\,\}=
\det(I-K^!_{x_i}\{h_{x_i}\})W_{x_i},
\end{equation}
where the probability above also incorporates the random fading conditions and scheduler decisions.
\end{Proposition}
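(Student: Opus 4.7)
The plan is to start by conditioning on the scheduled set $\Psi$ and the interferer fading variables, then peel off the desired transmitter fading $F_{x_i,y_i}$ via the standard Rayleigh trick, and finally invoke the determinantal generating (Laplace) functional on the reduced Palm process. This is exactly the template used in~\cite{blaszczyszyn2020coverage}, so the work is to verify that the algebraic pieces line up with the functions $h_{x_i}$ and $W_{x_i}$ defined in~\eqref{e.h1}--\eqref{e.w1}.

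First I would rewrite the event $\{\SINR(x_i,y_i,\Psi)>\tau\}$ as
\begin{equation*}
\left\{F_{x_i,y_i} > \tau\,\ell(x_i-y_i)^{-1}\Bigl(\noise+\!\!\sum_{x_j\in\Psi\setminus\{x_i\}} F_{x_j,y_i}\ell(x_j-y_i)\Bigr)\right\}.
\end{equation*}
Because $F_{x_i,y_i}$ is unit-mean exponential and independent of everything else, conditioning on $\Psi$ and on $(F_{x_j,y_i})_{j\ne i}$ and using $\ell(x_i-y_i)^{-1}=(\kappa|x_i-y_i|)^{\beta}$ gives the conditional probability
\begin{equation*}
\exp\!\bigl(-\tau\noise(\kappa|x_i-y_i|)^{\beta}\bigr)\prod_{x_j\in\Psi\setminus\{x_i\}}\exp\!\bigl(-\tau F_{x_j,y_i}\,\ell(x_j-y_i)/\ell(x_i-y_i)\bigr).
\end{equation*}
The prefactor is exactly $W_{x_i}=w(|x_i-y_i|)$. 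Averaging each interferer factor over $F_{x_j,y_i}$ via the exponential Laplace transform $\E[e^{-sF}]=1/(1+s)$ and simplifying the algebra shows that
\begin{equation*}
\E\bigl[\exp\!\bigl(-\tau F_{x_j,y_i}\ell(x_j-y_i)/\ell(x_i-y_i)\bigr)\bigr] \;=\; \frac{(|x_j-y_i|/|x_i-y_i|)^{\beta}}{(|x_j-y_i|/|x_i-y_i|)^{\beta}+\tau} \;=\; h_{x_i}(x_j),
\end{equation*}
using the definition of $h$ in~\eqref{e.h1}. Thus, conditionally on $\Psi$, the coverage probability equals $W_{x_i}\prod_{x_j\in\Psi\setminus\{x_i\}}h_{x_i}(x_j)$.

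Next I would condition on $x_i\in\Psi$ and use the reduced Palm description of determinantal point processes recalled before the statement: given $x_i\in\Psi$, the process $\Psi\setminus\{x_i\}$ has the law of a determinantal point process on $\statespace\setminus\{x_i\}$ with kernel $K^!_{x_i}$. Hence
\begin{equation*}
\Prob\{\SINR(x_i,y_i,\Psi)>\tau\mid x_i\in\Psi\} \;=\; W_{x_i}\,\E\!\left[\prod_{x_j\in\Psi^!_{x_i}} h_{x_i}(x_j)\right].
\end{equation*}
The final step is the determinantal generating functional: for any $g:\statespace\to[0,1]$ and any determinantal process with kernel $K'$ on the corresponding points,
\begin{equation*}
\E\!\left[\prod_{z\in\Psi} g(z)\right] \;=\; \det\!\bigl(I-(I-g)^{1/2}K'(I-g)^{1/2}\bigr) \;=\; \det(I-K'\{g\}),
\end{equation*}
where the second equality is just the notation $K'\{g\}$ introduced in the excerpt. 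Applied with $K'=K^!_{x_i}$ and $g=h_{x_i}$ this yields the claimed formula.

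The only mild obstacle is bookkeeping: one has to check that $h_{x_i}(x_j)\in[0,1]$ so that the determinantal generating-functional identity applies (it does, since $h(s,r)$ is in $[0,1]$), and that the algebraic manipulation turning $1/(1+\tau(|x_i-y_i|/|x_j-y_i|)^{\beta})$ into $h_{x_i}(x_j)$ is done correctly. Everything else is a direct application of the Rayleigh-fading computation of~\cite[Lemma II.1]{blaszczyszyn2020coverage} together with the reduced Palm formula and the determinantal generating functional, exactly as in~\cite[Appendix A]{blaszczyszyn2020coverage}.
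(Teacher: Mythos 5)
Your argument is correct and is essentially the paper's own proof, which is deferred to~\cite{blaszczyszyn2020coverage} (Appendix~A): condition on $\Psi$ and the interferer fadings, use the exponential tail to factor out $W_{x_i}$ and the per-interferer Laplace-transform factors, check these equal $h_{x_i}(x_j)\in[0,1]$, and then combine the reduced Palm kernel $K^!_{x_i}$ with the determinantal generating functional $\E\bigl[\prod_{z\in\Psi}g(z)\bigr]=\det(I-K\{g\})$. No gaps; the algebra identifying $1/(1+\tau\,\ell(x_j-y_i)/\ell(x_i-y_i))$ with $h_{x_i}(x_j)$ checks out.
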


We now consider the unconditional SINR (tail) distribution at $y_i$ from $x_i\in\statespace$, namely 
\begin{equation}
\calP_i(\tau):= \Prob\{\,x_i\in\Psi\text{\ and\ \ } \SINR(x_i,y_i,\Psi)>\tau\,\},\quad \tau>0. 
\end{equation}
Then we have an expression for the coverage probability $\calP_i$.
\begin{Corollary}
The coverage probability is given by
\begin{equation}\label{e.covprob}
\calP_i(\tau)=[K]_{x_i,x_i}\det(I-K^!_{x_i}\{h_{x_i}\})W_{x_i}.
\end{equation}
\end{Corollary}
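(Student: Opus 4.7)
The plan is to obtain the corollary as an essentially immediate consequence of Proposition~\ref{p.SINR-conditional} combined with the standard first-order inclusion probability of a determinantal point process. Concretely, I would start by writing the joint event as an intersection and conditioning on $\{x_i\in\Psi\}$:
\begin{equation}
\calP_i(\tau)=\Prob\{x_i\in\Psi\}\cdot\Prob\{\SINR(x_i,y_i,\Psi)>\tau\mid x_i\in\Psi\}.
\end{equation}
This step only requires that the conditional probability is well-defined, which is ensured as soon as $[K]_{x_i,x_i}>0$; if $[K]_{x_i,x_i}=0$, then both sides of~\eqref{e.covprob} vanish trivially.

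Next I would invoke the defining property of a determinantal kernel given in Definition~\ref{def.detpp} applied to the singleton $\event=\{x_i\}$:
\begin{equation}
\Prob\{x_i\in\Psi\}=\Prob\{\Psi\supseteq\{x_i\}\}=\det\bigl(K_{\{x_i\}}\bigr)=[K]_{x_i,x_i}.
\end{equation}
For the second factor, I would directly substitute the expression supplied by Proposition~\ref{p.SINR-conditional},
\begin{equation}
\Prob\{\SINR(x_i,y_i,\Psi)>\tau\mid x_i\in\Psi\}=\det\bigl(I-K^!_{x_i}\{h_{x_i}\}\bigr)W_{x_i}.
\end{equation}
Multiplying the two factors yields~\eqref{e.covprob}.

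There is really no obstacle here, since the heavy lifting (i.e.\ the explicit evaluation of the SINR tail under the reduced Palm distribution, incorporating Rayleigh fading, power-law path loss, and the functions $h$ and $w$ defined in~\eqref{e.h1}--\eqref{e.w1}) is already performed in Proposition~\ref{p.SINR-conditional}. The only conceptual point worth flagging is the consistent use of the reduced Palm kernel $K^!_{x_i}$ on the right-hand side versus the marginal kernel $K$ in the prefactor $[K]_{x_i,x_i}$: the former governs the conditional law of $\Psi\setminus\{x_i\}$ given $x_i\in\Psi$, while the latter accounts for the unconditional presence of $x_i$. Once this bookkeeping is made explicit, the corollary follows in one line.
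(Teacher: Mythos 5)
Your proposal is correct and follows essentially the same route as the paper: factor $\calP_i(\tau)$ as $\Prob\{x_i\in\Psi\}$ times the conditional probability, identify $\Prob\{x_i\in\Psi\}=\det(K_{\{x_i\}})=[K]_{x_i,x_i}$ from Definition~\ref{def.detpp}, and apply Proposition~\ref{p.SINR-conditional}. The extra remark about the degenerate case $[K]_{x_i,x_i}=0$ is a harmless refinement not present in the paper.
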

\begin{proof}
Note that $\Prob\{x_i\in\Psi\}=
\det(K_{\{x_i\}})=[K]_{x_i,x_i}$. Then we use Proposition~\ref{p.SINR-conditional} by conditioning on $x_i\in\Psi$.
\end{proof}
The matrix $(I-K^!_{x_i})\{h_{x_i}\}$ on the right-hand-side of~\eqref{e.covprob} is indexed by the elements of $\statespace\setminus\{x_i\}$. We extend it to the full configuration $\statespace$
by setting
\begin{equation}\label{e.Kt}    
[K_{x_i}(\tau)]_{x_j,x_k} := \left\{
	\begin{array}{ll}
		[I - K_{x_i}^!\{h_{x_i}\}]_{x_j,x_k}  & \mbox{if } j,k\not=i \\
				W_{x_i} [K]_{x_i,x_i}  & \mbox{if } j=k=i,\\
		0 &\mbox{if $j=i$ and $k\not=i$}\\
             0 &\mbox{if $k=i$ and $j\not=i$}\,  
			\end{array}
	\right. 
\end{equation}
This allows us to express $\calP_i(\tau)$ in a more compact way
\begin{equation}\label{e.covprob-det}
\calP_i(\tau)=\det(K_{x_i}(\tau)).     
\end{equation}

Expression~\eqref{e.covprob} for the coverage probability $\calP_i$ is straightforward to implement for numerical calculations; see the repository~\cite{keeler2025detschedule} for the file \texttt{CovProbDet.m}.

\subsection{Transmission rates and throughput}
For our bi-pole model, we are interested in quantifying the communication rate of a wireless link between a transmitter $x_i$ and its receiver $y_i$. 
For a transmitter-and-receiver pair $(x_i,y_i)$, we introduce the \emph{constant rate} model as
\begin{equation}\label{e.Rate-constant}
\GRate_i:=\Rate_{\tau,R_0}(x_i,y_i,\statespace):= R_0  \Prob[\,\SINR(x_i,y_i,\statespace)> \threshold\,] \, ,
\end{equation}
where $R_0>0$ is a constant 
and $\threshold$ is a fixed SINR threshold;
the probability corresponds to random fading variables.
We introduce the \emph{variable rate} model as
\begin{equation}\label{e.Rate-variable}
\Rate_r(x_i,y_i,\statespace):= \E[r(\SINR(x_i,y_i,\statespace))] \, ,
\end{equation}
where $r$ is some non-negative function that maps from the non-negative  real line, that is, $r:\R^+\rightarrow \R^+ $. The above expectation corresponds to random fading variables.
Based on Shannon's work, a natural choice is $r(t)=C\log(t+1)$, where the constant $C>0$.  

The explicit expressions~\eqref{e.covprob} and~\eqref{e.covprob-det} for the coverage probability~$\calP_i(\tau)$ immediately imply the following representation of the user throughput defined by the general expression~\eqref{e.Throughput}. 
\begin{Corollary}
The throughput $\Thpt_i$ of $x_i$  is given by 
\begin{equation}
\Thpt_i=R_0 \calP_i(\tau)\,,
\end{equation}
in the case of constant rate function~\eqref{e.Rate-constant}
and by 
\begin{equation}
\Thpt_i=\int_0^\infty \calP_i(r^{-1}(\tau))\,d\tau
\end{equation} 
in the case of variable rate function~\eqref{e.Rate-variable}, provided the inverse $r^{-1}$ 
of the rate function~$r$ exists.
\end{Corollary}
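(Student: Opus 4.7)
The plan is to unfold the definition $\Thpt_i = \E[\GRate_i(\Psi)]$ in each of the two rate models and then invoke expression~\eqref{e.covprob} for $\calP_i(\tau)$ already provided by the preceding corollary. Throughout, the convention $\GRate_i(\psi) = 0$ when $x_i \notin \psi$ plays the role of the indicator $\Ind\{x_i\in\Psi\}$, and the expectations $\E$ are jointly over the scheduler $\Psi$ and the independent fading variables $F_{x_j,y_i}$ entering the SINR.

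For the constant-rate case~\eqref{e.Rate-constant}, I would write
\begin{equation*}
\Thpt_i=\E[\GRate_i(\Psi)]=R_0\,\E\!\left[\Ind\{x_i\in\Psi\}\,\Prob\!\left(\SINR(x_i,y_i,\Psi)>\tau\mid\Psi\right)\right],
\end{equation*}
then use the tower property to collapse the inner conditional probability with the outer expectation, yielding
\begin{equation*}
\Thpt_i=R_0\,\Prob\{x_i\in\Psi\text{ and }\SINR(x_i,y_i,\Psi)>\tau\}=R_0\,\calP_i(\tau),
\end{equation*}
which is exactly the first claim.

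For the variable-rate case~\eqref{e.Rate-variable}, I would use the layer-cake (tail integral) formula $\E[Z]=\int_0^\infty\Prob(Z>\tau)\,d\tau$ valid for any non-negative random variable $Z$, applied with $Z=\Ind\{x_i\in\Psi\}\,r(\SINR(x_i,y_i,\Psi))$. By the definition of the throughput this gives
\begin{equation*}
\Thpt_i=\int_0^\infty\Prob\{x_i\in\Psi\text{ and }r(\SINR(x_i,y_i,\Psi))>\tau\}\,d\tau.
\end{equation*}
Under the assumption that $r^{-1}$ exists (which, together with non-negativity of $r$, forces $r$ to be strictly increasing), the event $\{r(\SINR)>\tau\}$ coincides with $\{\SINR>r^{-1}(\tau)\}$, so each integrand equals $\calP_i(r^{-1}(\tau))$ and the second claim follows.

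I do not anticipate any substantive obstacle: both parts are essentially bookkeeping on top of the already established coverage-probability formula. The only point that deserves a line of care is the monotonicity/measurability hypothesis hidden in the phrase ``provided $r^{-1}$ exists'', which justifies the set-theoretic rewriting $\{r(\SINR)>\tau\}=\{\SINR>r^{-1}(\tau)\}$; absent monotonicity one would need a layer-cake formulation through the level sets $r^{-1}((\tau,\infty))$, but the statement as given rules this out.
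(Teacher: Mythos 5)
Your proposal is correct and follows exactly the route the paper treats as immediate: the paper offers no explicit proof (the corollary is presented as following "immediately" from the coverage-probability expression~\eqref{e.covprob}), and your unfolding of $\Thpt_i=\E[\GRate_i(\Psi)]$ via the tower property in the constant-rate case and the layer-cake identity plus the rewriting $\{r(\SINR)>\tau\}=\{\SINR>r^{-1}(\tau)\}$ in the variable-rate case is precisely that bookkeeping. One small aside: invertibility plus non-negativity of $r$ does not by itself force $r$ to be increasing (a non-negative decreasing bijection is also invertible); the rewriting of the event really uses the implicit assumption that $r$ is increasing, as for the Shannon-type choice $r(t)=C\log(1+t)$ the paper has in mind.
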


\begin{Remark}
For the time-averaged transmission rate $\Thpt_i=\E[R_i(\Psi)]$, the constant rate function~\eqref{e.Rate-constant}
gives the  expression 
\begin{align}
\sum_{x_i\in\statespace}\log(\Thpt_i)&= 
m\log R_0+\sum_{i=1}^m \sum_{j=1}^m \log\lambda_{x_i,j}(\tau)
\end{align}
where $\lambda_{x_i,j}(\tau)$ is $j$-th eigenvalue of the matrix $K_{x_i}(\tau)$ defined in equation~\eqref{e.Kt}. 
(Recall that a determinant of a square matrix is equal to the product of its eigenvalues.)
We do not know whether this expression can 
help solve the problem of the proportionally fair determinantal scheduler~\eqref{e.Det-Utility-optimization}. 
\end{Remark}

\subsection{Controlling medium access and separating transmissions}
We recall that an $L$-ensemble $\Psi$ with matrix $L$ satisfies $\Prob\{\,\Psi=\psi\,\}={\det L_{\psi}}/{\det(L+I)}$. Consequently, the probability of the configuration  $\psi=\{x_{i_1},\ldots,x_{i_k}\}\in\statespace$ being scheduled for the transmission is equal to 
\begin{equation}\label{e.quality-diversity}
\Prob\{\,\psi=\Psi\,\}=
\frac{1}{\det(L+I)}
\prod_{x_{i}\in\psi} q^2(x_i)\det(S_\psi).
\end{equation}
In expression~\eqref{e.quality-diversity}, the quality-squared term
$q^2(x_i)$ controls the medium access probability of each node $x_i$, whereas the matrix $S$ controls the separation of 
different transmissions. The term 
$\det(L+I)$ is just the normalizing constant.

For the matrix $S$, we can use, 
for example, the {\em Gaussian kernel} 
\begin{equation}\label{eq.SGauss}
    [S]_{x_i,x_j}:= e^{-|x_i-x_j|^2/\sigma^2},
\end{equation}
where the scale parameter $\sigma$ controls separation of transmitting nodes.  
As $\sigma\rightarrow0$, the matrix $S$ converges to the identity matrix, giving the (adaptive) Aloha model. 


\section{Numerical example}
Using our fairness framework, we present a numerical example of the bi-pole network model outlined in Section~\ref{s.Network}; see Figure~\ref{fig:CovProbBipolarA} and Figure~\ref{fig:CovProbBipolarB}. We randomly placed a fixed number of transmitters in a square unit window, forming a binomial point process. For each transmitter, its receiver is located uniformly inside a disc with radius $r_{\max}$ centred at the transmitter; see the code  online~\cite{keeler2025detschedule}.

We compared the determinantal model to the classic Aloha model (with fixed access probability) and the adaptive Aloha model (with varying access probability).  
For each network layout, we maximized its total logarithmic utility (or fairness) with the rate set as the  coverage probability $\calP_i$ given by equation~\eqref{e.covprob-det}. For the two Aloha models, the similarity matrix $S$ is simply the $n\times n$ identity matrix, where $n$ is the number of transmitter-and-receiver pairs. For the determinantal model, we opted for the simple Gaussian $S$ kernel~\eqref{eq.SGauss}. For the determinantal and adaptive Aloha models, the optimization happens over $n$ variables, while the fixed Aloha model only has one variable for optimizing. We used the parameters: SINR threshold $\tau=10$, path loss model $\ell(r)=(1+r)^{-\beta}$, path loss exponent $\beta=4$, exponential fading mean $\mu$, maximum transmission radius $r_{\max}=0.1$, and  Gaussian kernel parameter $\sigma=10$. The parameters were only chosen to illustrate the models in use, not for model accuracy. 

We observed that often there was not a significant difference between adaptive Aloha and fixed Aloha. We only looked at small networks with five and 10 transmitter-receiver pairs. Naturally, as there are more pairs, the coverage probability decreases due to the increase in interference in the network.  We stress that much more numerical investigation remains to be carried out. 

\begin{figure}[t]
\begin{minipage}[b]{0.48\linewidth}
\centering
\centerline{\includegraphics[width=1.1\linewidth]{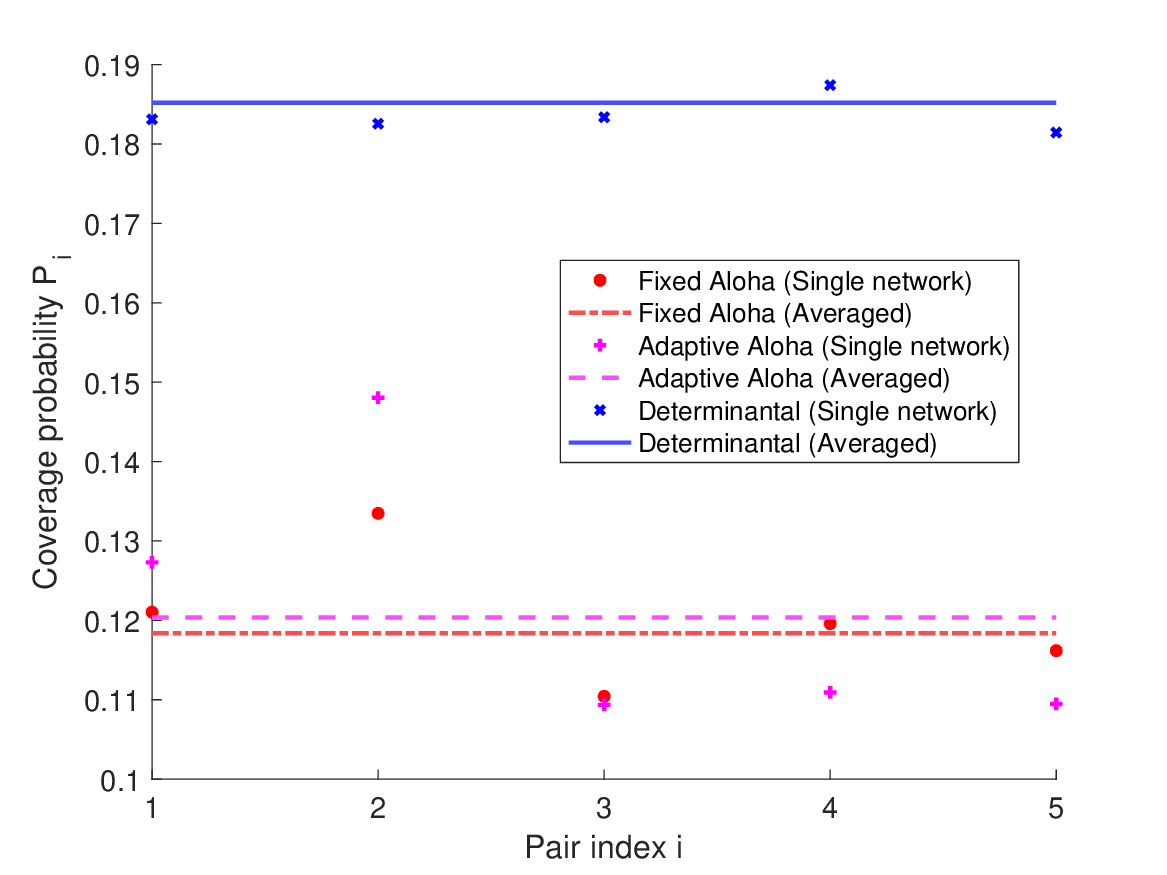}}
\vspace{-2ex}
\caption{\footnotesize Coverage probabilities (single realization and averaged) for a five-pair bi-pole network.}
\label{fig:CovProbBipolarA}
\end{minipage}
\hspace{0.1em}
\begin{minipage}[b]{0.48\linewidth}
\centering
\centerline{\includegraphics[width=1.1\linewidth]{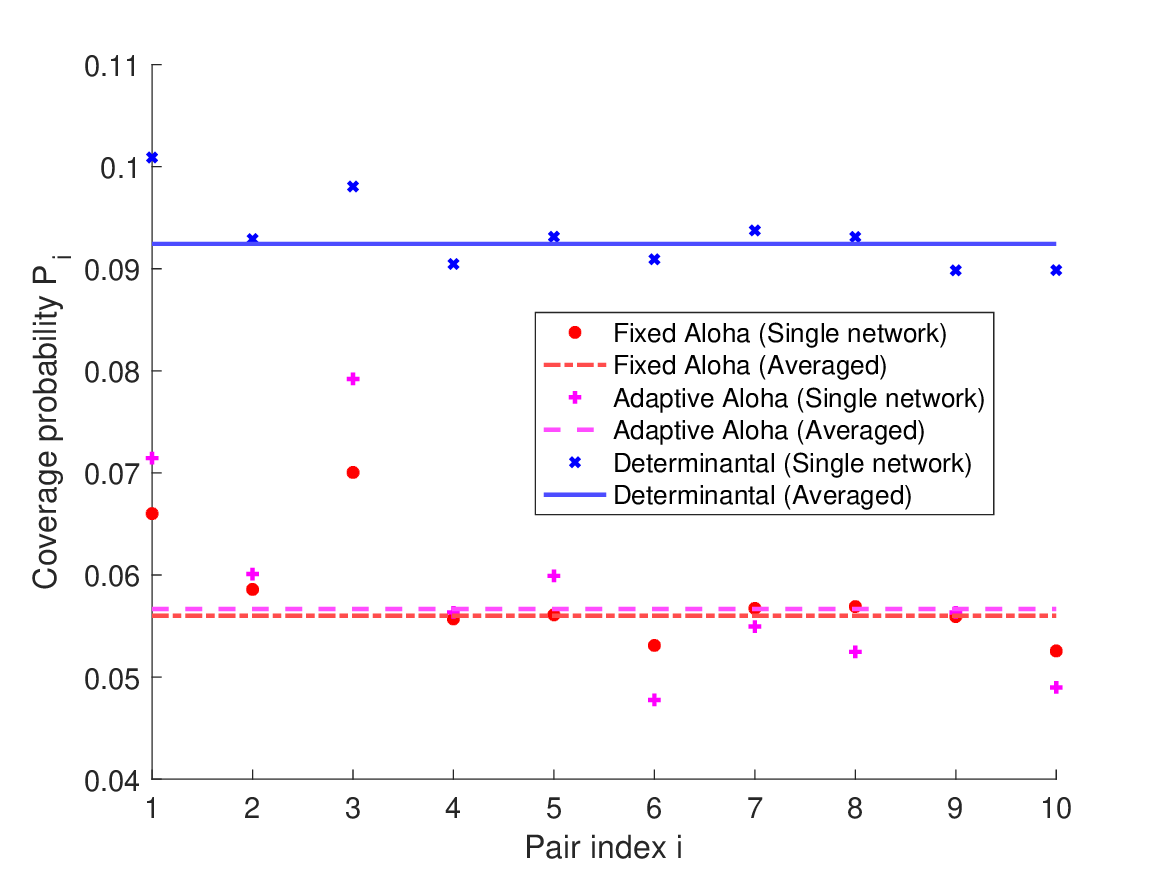}}
\vspace{-2ex}
\caption{\footnotesize Coverage probabilities (single realization and averaged) for a ten-pair bi-pole network.}
\label{fig:CovProbBipolarB}
\end{minipage}
\vspace{-3ex}
\end{figure}

\section{Conclusion}
We have presented an adaptive determinantal scheduling framework for wireless networks. By formulating the scheduling problem within the context of proportional fairness and leveraging $L$-ensembles, we have demonstrated that determinantal processes offer a mathematically elegant and computationally tractable approach for network scheduling.
Furthermore, the proposed approach is adaptable to various network configurations without requiring repeated optimization for each new network deployment.

Promising directions for future research emerge from this work.
\textbf{Training on pre-optimized networks}: Using pre-optimized networks to trains models with determinantal fairness framework  is the first obvious step. 
 \textbf{Multi-objective optimization:} Incorporating additional performance metrics beyond proportional fairness, such as energy efficiency, delay constraints, or quality-of-service requirements, could lead to more comprehensive scheduling solutions that balance multiple competing objectives.
\textbf{Distributed scheduling:} Developing decentralized versions of determinantal scheduling algorithms would aid their deployment in large-scale networks where centralized control is impractical. 
\textbf{Leveraging machine learning:} Further exploration of the connection between determinantal processes and machine learning techniques could yield powerful approaches. 
\textbf{Theoretical performance bounds:} Establishing theoretical guarantees on the performance gap between optimal scheduling and determinantal approximations would provide valuable insights into the fundamental limits of determinantal scheduling approaches.

In conclusion, determinantal point processes offer a promising mathematical foundation for wireless network scheduling that balances computational tractability with sophisticated spatial modelling. 

\addtocounter{section}{1}
\addcontentsline{toc}{section}{References}
{\scriptsize
\bibliography{Det}}
\bibliographystyle{abbrv}

\end{document}